\newtheorem{theorem}{Theorem}
\newtheorem{lemma}[theorem]{Lemma}
\theoremstyle{definition}
\newtheorem{definition}[theorem]{Definition}
\theoremstyle{remark}
\newcommand{\Z}{\mathbb{Z}}
\newcommand{\X}{\EuScript{X}}
\newcommand{\card}[1]{\left\lvert {#1} \right\rvert}
\DeclareMathOperator{\operatorClassNP}{NP}
\newcommand{\classNP}{\ensuremath{\operatorClassNP}}
\DeclareMathOperator{\operatorClassFPT}{FPT}
\newcommand{\classFPT}{\ensuremath{\operatorClassFPT}}
\DeclareMathOperator{\operatorClassW}{W}
\newcommand{\classW}[1]{\ensuremath{\operatorClassW[#1]}}
\DeclareMathOperator{\operatorClassXP}{XP}
\newcommand{\classXP}{\ensuremath{\operatorClassXP}}
\newcommand{\vbound}[1]{\card{N(#1)}}
\newcommand{\vboundG}[2]{\card{N_{#1}(#2)}}
\newcommand{\ebound}[1]{\card{\partial(#1)}}
\newcommand{\defparproblem}[5]{
  \vspace{1mm}
\noindent\fbox{
  \begin{minipage}{0.97\textwidth}
  \begin{tabular*}{\textwidth}{@{\extracolsep{\fill}}lr} #1  \\ \end{tabular*}
  {\bf{Input:}} #2  \\
    {\bf{Parameter 1:}} #3   \\
    {\bf{Parameter 2:}} #4\\
  {\bf{Question:}} #5
  \end{minipage}
  }
  \vspace{1mm}
}
\newenvironment{myabstract}
               {\list{}{\listparindent 1.5em%
                        \itemindent    \listparindent
                        \leftmargin    0pt
                        \rightmargin   0pt
                        \parsep        0pt}%
                \item\relax}
               {\endlist}
\newenvironment{mycover}
               {\list{}{\listparindent 0pt
                        \itemindent    \listparindent
                        \leftmargin    0pt
                        \rightmargin   0pt
                        \parsep        0pt}%
                \raggedright
                \item\relax}
               {\endlist}
\begin{document}

\vspace*{2ex}
\begin{mycover}
{\LARGE \textbf{On the parameterized complexity of cutting a few vertices from a graph\footnote{This work is supported by the European Research Council (ERC) via grant Rigorous Theory of Preprocessing, reference 267959 (F.F., P.G.) and by the Helsinki Doctoral Programme in Computer Science -- Advanced Computing and Intelligent Systems (J.K.).}${}^,$\footnote{This work has been published as a part of the proceedings of the 38th International Symposium on Mathematical Foundations of Computer Science ({MFCS} 2013) \cite{small-cuts-conf}. The final publication is available at \url{http://link.springer.com/chapter/10.1007\%2F978-3-642-40313-2_38}.}}\par}

\bigskip
\bigskip

\medskip
\textbf{Fedor V.\ Fomin}\\
{\small Department of Informatics, University of Bergen, Norway\\
\nolinkurl{fedor.fomin@ii.uib.no}\par}

\medskip
\textbf{Petr A.\ Golovach}\\
{\small Department of Informatics, University of Bergen, Norway\\
\nolinkurl{petr.golovach@ii.uib.no}\par}

\medskip
\textbf{Janne H.\ Korhonen}\\
{\small Helsinki Institute for Information Technology HIIT, \\
Department of Computer Science, University of Helsinki, Finland\\
\nolinkurl{janne.h.korhonen@cs.helsinki.fi}\par}
\end{mycover}

\bigskip
\bigskip

\begin{myabstract}
\noindent\textbf{Abstract.} We study the parameterized complexity of separating a small set of vertices from a graph by a small vertex-separator. That is, given a graph $G$ and integers $k$, $t$, the task is to find a vertex set $X$ with $|X| \le k$ and $|N(X)| \le t$. We show that
\begin{itemize}
	\item the problem is fixed-parameter tractable (FPT) when parameterized by $t$ but W[1]-hard when parameterized by $k$, and
	\item a terminal variant of the problem, where $X$ must contain a given vertex $s$, is W[1]-hard when parameterized either by $k$ or by $t$ alone, but is FPT when parameterized by $k + t$.  
\end{itemize}
We also show that if we consider edge cuts instead of vertex cuts, the terminal variant is NP-hard.
\end{myabstract}

\thispagestyle{empty}
\setcounter{page}{0}
\newpage



\section{Introduction}

We investigate two related problems that concern separating a small vertex set from a graph $G = (V,E)$. Specifically, we consider finding a vertex set $X$ of size at most $k$ such that
\begin{enumerate}
	\item $X$ is separated from the rest of $V$ by a small cut (e.g. \emph{finding communities in a social network}, cf.~\cite{li2012}), or
	\item $X$ is separated from the rest of $V$ by a small cut and contains a specified terminal vertex $s$ (e.g. \emph{isolating a dangerous node}, cf.~\cite{galbiati2011,hayrapetyan2005}).
\end{enumerate}
We focus on \emph{parameterized complexity} of the \emph{vertex-cut versions} of these problems.

\paragraph{Parameterized vertex cuts.} Our interest in the vertex-cut version stems from the following parameterized separation problem, studied by Marx \cite{marx2006parameterized}. Let $N(X)$ denote the vertex-neighborhood of $X$.

\medskip 

\defparproblem{\textsc{Cutting $k$ Vertices}}{Graph $G=(V,E)$, 
integers $k\geq 1$, $t\geq 0$
}{$k$}{$t$}{Is there a set $X \subseteq V$ such that $\card{X}=k$ and $\card{N(X)} \leq t$?}

\medskip

In particular, Marx showed that \textsc{Cutting $k$ Vertices} is \classW{1}-hard even when parameterized by both $k$ and $t$. We contrast this result by investigating the parameterized complexity of the two related separation problems with relaxed requirement on the size of the separated set $X$.

\medskip

\defparproblem{\textsc{Cutting at Most  $k$ Vertices}}{Graph $G=(V,E)$, 
integers $k\geq 1$, $t\geq 0$}{$k$}{$t$}{Is there a non-empty set $X \subseteq V$ such that $\card{X} \le k$ and $\card{N(X)} \le t$?}

\medskip

\defparproblem{\textsc{Cutting at Most  $k$ Vertices with Terminal}}{Graph $G=(V,E)$, terminal vertex $s$, 
integers $k\geq 1$, $t\geq 0$
}{$k$}{$t$}{Is there a non-empty set $X \subseteq V$ such that $s \in X$, $\card{X} \le k$ and $\card{N(X)} \le t$?}

\medskip

We show that these closely related problems exhibit quite different complexity behaviors. In particular, we show that \textsc{Cutting at Most  $k$ Vertices} is fixed-parameter tractable (\classFPT) when parameterized by the size of the separator $t$, while we need both $k$ and $t$ as parameters to obtain an \classFPT{} algorithm for \textsc{Cutting at Most  $k$ Vertices with Terminal}. A full summary of the parameterized complexity of these problems and our results is given in Table \ref{tabl:compl}. 

\begin{table}[b]
\begin{center}
\begin{tabular}{c|c|c|c}
Parameter & \textsc{Cutting $k$} & \textsc{Cutting  $\leq k$} & \textsc{Cutting    $\leq k$ Vertices}  \\
 &  \textsc{Vertices}  & \textsc{Vertices} & \textsc{with Terminal} \\
\hline
$k$ &  \classW{1}-hard,   \cite{marx2006parameterized} &  \classW{1}-hard,  Thm~\ref{thm:svc_is_hard}&    \classW{1}-hard, Thm~\ref{thm:svc_is_hard} \\
\hline
$t$ & \classW{1}-hard,    \cite{marx2006parameterized} & \classFPT,  Thm~\ref{thm:svc_fpt}  &    \classW{1}-hard,  Thm~\ref{thm:svc_terminal_hard}  \\
\hline
 $k$ and $t$ & \classW{1}-hard,    \cite{marx2006parameterized}  & \classFPT, Thm~\ref{thm:svc_fpt_kt}  &  \classFPT,     Thm~\ref{thm:svc_fpt_kt} \\
\end{tabular}
\vspace{2mm}
\caption{Parameterized complexity of  \textsc{Cutting $k$ Vertices}, \textsc{Cutting at Most  $k$ Vertices},  and \textsc{Cutting at Most  $k$ Vertices with Terminal}.}\label{tabl:compl}
\end{center}
\end{table}

The main algorithmic contribution of our paper is the proof that  \textsc{Cutting at most  $k$ vertices}  is \classFPT{} when parameterized by $t$ (Theorem \ref{thm:svc_fpt}). To obtain this result, we utilize the concept of \emph{important separators} introduced by Marx \cite{marx2006parameterized}. However, a direct application of important separators---guess a vertex contained in the separated set, and find a minimal set containing this vertex that can be separated from the remaining graph by at most $t$ vertices---does not work. Indeed, pursuing this approach would bring us to essentially solving  \textsc{Cutting at most  $k$ vertices with terminal}, which is
\classW{1}-hard when parameterized by $t$. 
Our \classFPT{} algorithm is based on new  structural results about unique important separators of minimum size separating pairs of vertices.   
We also observe that it is unlikely that \textsc{Cutting at most  $k$ vertices} has a polynomial kernel.

\paragraph{Edge cuts.} Although our main focus is on vertex cuts, we will also make some remarks on the edge-cut versions of the problems. In particular, the edge-cut versions again exhibit a different kind of complexity behavior. Let $\partial(X)$ denote the edge-boundary of $X$.

\medskip

\defparproblem{\textsc{Cutting at Most  $k$ Vertices by Edge-Cut}}{Graph $G=(V,E)$, 
integers $k\geq 1$, $t\geq 0$}{$k$}{$t$}{Is there a non-empty set $X \subseteq V$ such that $\card{X} \le k$ and $\ebound{X} \le t$?}
\medskip

\defparproblem{\textsc{Cutting $k$ Vertices by Edge-Cut with Terminal}}
{Graph $G=(V,E)$, terminal vertex $s$, 
integers $k\geq 1$, $t\geq 0$}
{$k$}{$t$}
{Is there a set $X\subseteq V$ such that $s\in X$, $\card{X}\leq k$ and $\ebound{X}\leq t$?}

\medskip

Results by Watanabe and Nakamura~\cite{WatanabeN87} imply that \textsc{Cutting at most  $k$ vertices by edge-cut} can be done in polynomial time even when $k$ and $t$ are part of the input; more recently, Armon and Zwick \cite{armon2006} have shown that this also holds in the edge-weighted case. Lokshtanov and Marx~\cite{Lokshtanov2013278} have proven that \textsc{Cutting at most $k$ vertices by edge-cut with terminal} is fixed-parameter tractable when parameterized by $k$ or by $t$; see also \cite{yixin}. We complete the picture by showing that \textsc{Cutting at most  $k$ vertices by edge-cut with terminal} is \classNP-hard (Theorem \ref{thm:sec_terminal_np}). The color-coding techniques we employ in Theorem~\ref{thm:svc_fpt_kt} also give a simple algorithm with running time $2^{k + t + o( k + t )}\cdot n^{O(1)}$.

Related edge-cut problems have received attention in the context of approximation algorithms. In contrast to \textsc{Cutting at most  $k$ vertices by edge-cut}, finding a minimum-weight edge-cut that separates exactly $k$ vertices is NP-hard. Feige et al. \cite{Feige2003643} give a PTAS for $k = O(\log n)$ and an $O(k / \log n)$-approximation for $k = \Omega(\log n)$; Li and Zhang \cite{li2012} give an $O(\log n)$-approximation. Approximation algorithms have also been given for unbalanced $s$-$t$-cuts, where $s$ and $t$ are specified terminal vertices and the task is to find an edge cut $(X, V \setminus X)$ with $s \in X$ and $t \in V \setminus S$ such that (a) $\card{X} \le k$ and weight of the cut is minimized \cite{galbiati2011,li2012}, or (b) weight of the cut is at most $w$ and $\card{X}$ is minimized \cite{hayrapetyan2005}.


\section{Basic definitions and preliminaries}\label{sec:defs}

\paragraph{Graph theory.} We follow the conventions of Diestel~\cite{Diestel10} with graph-theoretic notations. 
We only consider finite, undirected graphs that do not contain loops or multiple
edges.
The vertex set of a graph $G$ is denoted by $V(G)$ and  
the edge set is denoted by $E(G)$, or simply by $V$ and $E$, respectively.
Typically we use $n$ to denote the number of vertices of $G$ and $m$ the number of edges.

For a set of vertices $U\subseteq V(G)$, we write
$G[U]$ for the subgraph of $G$ induced by $U$,
and $G-U$ for the graph obtained form $G$ by the removal of all the vertices of $U$, i.e., the subgraph of $G$ induced by $V(G)\setminus U$.
Similarly, for a set of edges $A$, the graph obtained from $G$ by the removal of all the edges in $A$ is denoted by $G-A$.

For a vertex $v$, we denote by $N_G(v)$ its
(\emph{open}) \emph{neighborhood}, that is, the set of vertices which are
adjacent to $v$. The \emph{degree} of a vertex $v$ is $d_G(v)=\card{N_G(v)}$.
For a set of vertices $U\subseteq V(G)$, we write $N_G(U)=\cup_{v\in U}N_G(v)\setminus U$ and
$\partial_G(U) = \left\{uv\in E(G) \mid u\in U, v\in V(G)\setminus U\right\}$. 
We may omit subscripts in these notations if there is no danger of ambiguity.

\paragraph{Submodularity.} We will make use of the well-known fact that given a graph $G$, the mapping $2^V \to \Z$ defined by $U \mapsto \vbound{U}$ is \emph{submodular}. That is, for $A, B \subseteq V$ we have
\begin{equation}
\vbound{A\cap B}+\vbound{A\cup B} \le \vbound{A}+ \vbound{B}\,.\label{eq:vsubm}
\end{equation}

\paragraph{Important separators.} Let $G$ be a graph. For disjoint sets $X, Y \subseteq V$, a vertex set $S \subseteq V \setminus (X \cup Y)$ is a (\emph{vertex}) \emph{$(X,Y)$-separator} if there is no path from $X$ to $Y$ in $G-S$. An \emph{edge $(X,Y)$-separator} $A \subseteq E$ is defined analogously. 
Note that we do not allow deletion of vertices in $X$ and $Y$, and thus there are no vertex $(X,Y)$-separators if $X$ and $Y$ are adjacent.
As our main focus is on the vertex-cut problems, all separators are henceforth vertex-separators unless otherwise specified.

We will make use of the concept of \emph{important separators}, introduced by Marx~\cite{marx2006parameterized}.
A vertex $v$ is \emph{reachable} from a set $X\subseteq V$ if $G$ has a path that joins a vertex of $X$ and $v$. 
For any sets $S$ and $X \subseteq V \setminus S$, we denote the set of vertices reachable from $X$ in $G-S$ by $R(X,S)$.
An $(X,Y)$-separator $S$ is \emph{minimal} if no proper subset of $S$ is an $(X,Y)$-separator. For $(X,Y)$-separators $S$ and $T$, we say that $T$ \emph{dominates} $S$ if $\card{T} \le \card{S}$ and $R(X,S)$ is a proper subset of $R(X,T)$.
For singleton sets, we will write $x$ instead of $\{ x \}$ in the notations defined above.

\begin{definition}[\cite{marx2006parameterized}]
An $(X,Y)$-separator $S$ is \emph{important} if it is minimal and there is no other $(X,Y)$-separator dominating $S$.
\end{definition}

In particular, this definition implies that for any $(X,Y)$-separator $S$ there exists an important $(X,Y)$-separator $T$ with $\card{T} \le \card{S}$ and $R(X,T) \supseteq R(X,S)$. If $S$ is not important, then at least one of the aforementioned relations is proper.

The algorithmic usefulness of important separators follows from the fact that the number of important separators of size at most $t$ is bounded by $t$ alone, and furthermore, these separators can be listed efficiently. Moreover, minimum-size important separators are unique and can be found in polynomial time. That is, we will make use of the following lemmas.

\begin{lemma}[\cite{chen2009improved}]\label{lemma:computing_imp_seps}
For any disjoint sets $X, Y \subseteq V$, the number of important $(X,Y)$-separators of size at most $t$ is at most $4^{t}$, and all important $(X,Y)$-separators of size at most $t$ can be listed in time $4^{t}\cdot n^{O(1)}$.
\end{lemma}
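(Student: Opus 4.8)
The plan is to prove both statements at once through a bounded-search-tree algorithm whose recursion tree has at most $4^t$ leaves, each leaf producing at most one important separator and each node doing polynomial work. The driving quantity is the measure $\mu = 2t - \lambda$, where $\lambda = \lambda(X,Y)$ is the minimum size of an $(X,Y)$-separator; initially $\lambda \ge 0$ gives $\mu \le 2t$, and I will design each branching step to split into at most two subproblems while strictly decreasing $\mu$, so that the number of leaves, and hence the number of important separators produced, is at most $2^{2t} = 4^t$.

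First I would collect the structural facts that drive the branching. Using submodularity \eqref{eq:vsubm}, I would show that among all minimum-size $(X,Y)$-separators there is a unique one $S^*$ maximizing the reachable set $R(X,\cdot)$: if $S_1, S_2$ are minimum separators with reachable sets $A, B$, then $N(A \cup B)$ is again a separator, and submodularity together with $\vbound{A \cap B} \ge \lambda$ forces $\vbound{A \cup B} \le \lambda$, so the maximum-reachable minimum separators are closed under taking the separator of $A \cup B$ and a unique maximal element $S^*$ exists. This $S^*$ is exactly the unique important separator of size $\lambda$, and it is computable in polynomial time from a maximum $(X,Y)$-flow (after the usual splitting of each vertex into an in/out pair) by reading off $S^* = N(R^*)$ with $R^*$ the residual-reachable set. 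I would also record two elementary facts: for any minimal separator $S$ one has $S = N(R(X,S))$, and, crucially, for every important separator $S$ one has $R^* \subseteq R(X,S)$. The latter follows by applying submodularity to $A = R(X,S)$ and $B = R^*$: the set $N(A \cup B)$ is an $(X,Y)$-separator of size at most $\card{S}$ whose reachable set contains $A \cup B$, so if $R^* \not\subseteq R(X,S)$ this separator would dominate $S$, contradicting importance.

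These facts make the branching clean. If $\lambda > t$ there is nothing to output; otherwise I compute $S^*$ and pick any $v \in S^* = N(R^*)$. For an important separator $S$ we have $R^* \subseteq R(X,S)$, so $v \in N(R^*)$ is adjacent to $R(X,S)$ and therefore lies either in $S$ or in $R(X,S)$, a genuinely binary case distinction. In the first branch I force $v \in S$, delete $v$, recurse in $G - v$ with budget $t-1$, and re-insert $v$ into every separator found; since vertex deletion lowers $\lambda$ by at most one, $\mu = 2t - \lambda$ drops by at least $2 - 1 = 1$. In the second branch I force $v \in R(X,S)$, moving $v$ to the source by recursing on $(X \cup \{v\}, Y)$ with the same budget $t$; because $v$ belongs to the minimum separator $S^*$, the minimum separator size strictly increases, $\lambda(X \cup \{v\}, Y) \ge \lambda + 1$, so $\mu$ again drops by at least one. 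Thus every internal node has at most two children of strictly smaller measure, the tree has at most $4^t$ leaves, and gathering the candidate separators at the leaves and discarding those that are non-minimal or dominated yields all important separators; this simultaneously bounds their number by $4^t$ and lists them in time $4^t \cdot n^{O(1)}$.

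The step I expect to require the most care is verifying the two monotonicity claims underlying the measure, and especially the strict increase $\lambda(X \cup \{v\}, Y) \ge \lambda + 1$ in the second branch. This is precisely where the special choice of $v$ from the unique maximum-reachable minimum separator $S^*$ is essential: one must argue, via max-flow/Menger together with the maximality characterization of $R^*$, that no minimum $(X,Y)$-separator avoiding $v$ can still separate $X \cup \{v\}$ from $Y$, for otherwise its reachable set would properly contain $R^*$ among minimum separators and contradict the uniqueness of $S^*$. The complementary bound that deleting one vertex decreases $\lambda$ by at most one is routine. I would also need the bookkeeping check that every important separator is consistent with the case choices along exactly one root-to-leaf path, and that the final filtering correctly discards candidates that are minimal but dominated, so that the output is exactly the set of important separators of size at most $t$.
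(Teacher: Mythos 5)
The paper gives no proof of this lemma at all --- it is imported verbatim from \cite{chen2009improved} --- so there is nothing in-paper to compare against; your task was effectively to reconstruct the proof from that reference, and you have done so correctly. Your argument is the standard one: the measure $2t-\lambda$, the unique minimum separator $S^*$ maximizing $R(X,\cdot)$, the fact (via submodularity) that $R^*\subseteq R(X,S)$ for every important separator $S$, the resulting binary branching on $v\in S$ versus $v\in R(X,S)$, and the strict increase of $\lambda$ when $v\in S^*$ is moved into $X$ --- each of these steps checks out, including the delicate one you single out. The one item you flag but do not carry out, namely that an important $(X,Y)$-separator of $G$ containing $v$ restricts to an important separator of $G-v$ and one avoiding $v$ remains important for $(X\cup\{v\},Y)$ (so that the inductive count and the completeness of the enumeration both go through), is the standard preservation lemma and does hold, so the proposal is sound.
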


\begin{lemma}[\cite{marx2006parameterized}]\label{lemma:unique_min_sep}
For any sets $X, Y \subseteq V$, if there exists an $(X,Y)$-separator, then there is exactly one important $(X,Y)$-separator of minimum size. This separator can be found in polynomial time.
\end{lemma}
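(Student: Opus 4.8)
The plan is to show that the unique important minimum-size $(X,Y)$-separator is precisely the minimum-size separator whose reachable set $R(X,\cdot)$ is largest, and that this separator can be extracted from a maximum-flow computation. Throughout, let $\lambda$ denote the minimum size of an $(X,Y)$-separator; if no separator exists there is nothing to prove, so assume $\lambda$ is well defined. The first preliminary step is to record that any minimal $(X,Y)$-separator $S$ satisfies $N(R(X,S)) = S$: the inclusion $N(R(X,S)) \subseteq S$ holds because a vertex adjacent to $R(X,S)$ but outside it can only fail to be reachable if it lies in $S$, and the reverse inclusion follows from minimality, since any $s \in S$ not adjacent to $R(X,S)$ could be removed from $S$ without creating an $X$--$Y$ path. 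In particular this holds for every minimum-size separator, so the reachable set determines the separator.

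The core of the argument is an exchange argument via submodularity. Let $S_1, S_2$ be minimum-size $(X,Y)$-separators and set $A = R(X,S_1)$, $B = R(X,S_2)$, so that $\vbound{A} = \vbound{B} = \lambda$ by the previous paragraph. Since $X \subseteq A \cap B$ and $Y$ meets neither $A$ nor $B$, both $N(A \cap B)$ and $N(A \cup B)$ are again $(X,Y)$-separators, hence each has size at least $\lambda$. Applying submodularity \eqref{eq:vsubm} to $A$ and $B$ gives
\begin{equation*}
\vbound{A \cap B} + \vbound{A \cup B} \le \vbound{A} + \vbound{B} = 2\lambda,
\end{equation*}
so both terms on the left equal $\lambda$. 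Thus $N(A \cup B)$ is itself a minimum-size separator, and a short check shows every vertex of $A \cup B$ remains reachable from $X$ in $G - N(A\cup B)$, so $R(X, N(A\cup B)) \supseteq A \cup B$. Consequently the family of reachable sets of minimum-size separators is closed under union, and there is a unique maximal such set, realized by a minimum-size separator $S^\ast$ with $R(X, S^\ast) \supseteq R(X,S)$ for every minimum-size separator $S$.

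It then remains to identify $S^\ast$ as the unique important minimum-size separator. If $S \ne S^\ast$ is any other minimum-size separator, then $R(X,S) \subsetneq R(X,S^\ast)$ --- the inclusion is proper because reachable sets determine separators --- so $S^\ast$ dominates $S$ and $S$ is not important. Conversely, any separator $T$ dominating $S^\ast$ would satisfy $\card{T} \le \lambda$, hence $\card{T} = \lambda$, together with $R(X,S^\ast) \subsetneq R(X,T)$, contradicting the maximality of $R(X,S^\ast)$; thus $S^\ast$ is important. This proves both existence and uniqueness.

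For the algorithmic claim, I would compute $\lambda$ and locate $S^\ast$ using a single maximum-flow computation. Model vertex cuts as edge cuts in the standard way, splitting each vertex $v \notin X \cup Y$ into an in-copy and an out-copy joined by a unit-capacity arc and giving all other arcs infinite capacity, with $X$ as source side and $Y$ as sink side. A maximum flow yields $\lambda$, and the minimum cut that maximizes the reachable set on the $X$-side --- equivalently, the cut ``closest to $Y$'' --- is read off from the residual graph by taking the vertices that cannot reach $Y$; translating back gives $S^\ast$ in polynomial time. The main obstacle is the submodular exchange step: one must verify carefully that $N(A \cup B)$ and $N(A \cap B)$ are genuine separators (in particular that no vertex of $Y$ is adjacent to $A \cup B$) and that taking neighborhoods of unions does not shrink reachability, since everything downstream --- uniqueness, importance, and the flow-based extraction --- rests on the resulting closure-under-union property.
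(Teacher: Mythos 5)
The paper does not prove this lemma --- it is quoted verbatim from Marx \cite{marx2006parameterized} --- so there is no internal proof to compare against; your argument is correct and is essentially the standard one from that source: submodular uncrossing of the reachable sets of minimum-size separators to get closure under union, hence a unique maximal reachable set whose separator is the unique important one, extracted as the min cut ``closest to $Y$'' from the residual graph of a max-flow on the vertex-split network. The details you flag as needing verification (that $N(A\cup B)$ and $N(A\cap B)$ avoid $Y$ and are genuine separators, and that $R(X,N(A\cup B))\supseteq A\cup B$) all check out routinely, e.g.\ a vertex of $Y$ adjacent to $A=R(X,S_1)$ would itself be reachable in $G-S_1$, contradicting that $S_1$ separates $X$ from $Y$.
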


\paragraph{Parameterized complexity.} We will briefly review the basic notions of parameterized complexity, though we refer to the books of Downey and Fellows~\cite{DowneyF99}, Flum and Grohe~\cite{FlumG06}, and   Niedermeier~\cite{Niedermeierbook06} for a detailed introduction. 
Parameterized complexity is a two-dimensional framework for studying the computational complexity of a problem; one dimension is the input size
$n$ and another one is a parameter $k$. A parameterized problem is \emph{fixed-parameter tractable} (or \classFPT) if it can be solved in time $f(k)\cdot n^{O(1)}$ for some function $f$, and in the class \classXP{} if it can be solved in time $O\left(n^{f(k)}\right)$ for some function $f$.

Between \classFPT{} and \classXP{} lies the class \classW{1}. One of basic assumptions of the parameterized complexity theory is the conjecture that $\classW{1}\neq \classFPT$, and it is thus held to be unlikely that a \classW{1}-hard problem would be in \classFPT{}. For exact definition of \classW{1}, 
we refer to the books mentioned above.  We mention only that {\sc Indpendent Set} and {\sc Clique}
parameterized by solution size are two fundamental problems that are known to be \classW{1}-complete.

The basic way of showing that a parameterized problem is unlikely to be fixed-parameter tractable is to prove \classW{1}-hardness. To show that a problem is \classW{1}-hard, it is enough to give a \emph{parameterized reduction} from a known \classW{1}-hard problem.
That is, let $A,B$ be parameterized problems. We say that $A$ is (uniformly many-one) {\em \classFPT{}-reducible} to $B$ if there exist functions
$f,g:\mathbb{N}\rightarrow \mathbb{N}$, a constant $c \in \mathbb{N}$ and
 an algorithm $\mathcal{A}$ that transforms an instance $(x,k)$ of $A$ into an instance $(x',g(k))$ of $B$
in time $f(k) |x|^c$ so that $(x,k) \in A$ if and only if $(x',g(k)) \in B$.

\paragraph{Cutting problems with parameters $k$ and $t$.} In the remainder of this section, we consider \textsc{Cutting at most  $k$ vertices}, \textsc{Cutting at most $k$ vertices with terminal}, and \textsc{Cutting at most  $k$ vertices by edge-cut with terminal} with parameters $k$ and $t$. We first note that if there exists a solution for one of the problems, then
there is also a solution in which $X$ is connected; 
indeed, it suffices to take any maximal connected $Y \subseteq X$. Furthermore, we note finding a connected set $X$ with $\card{X} = k$ and $\vbound{X} \le t$ is fixed-parameter tractable with parameters $k$ and $t$ due to a result by Marx~\cite[Theorem 13]{marx2006parameterized}, and thus 
\textsc{Cutting at most $k$ vertices} is also fixed-parameter tractable with parameters $k$ and $t$.

We now give a simple color-coding algorithm~\cite{AlonYZ95,cai2006random} for the three problems with parameters $k$ and $t$, in particular improving upon the running time of the aforementioned algorithm for \textsc{Cutting at most  $k$ vertices}.

\begin{theorem}\label{thm:svc_fpt_kt}
\textsc{Cutting at most  $k$ vertices}, \textsc{Cutting at most $k$ vertices with terminal}, and \textsc{Cutting at most  $k$ vertices by edge-cut with terminal} can be solved in time $2^{k + t}\cdot (k + t)^{O(\log (k+t))}\cdot n^{O(1)}$.
\end{theorem}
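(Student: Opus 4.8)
The plan is to use color-coding, following the reduction already recorded above: for each of the three problems we may restrict the search to \emph{connected} sets $X$, since a maximal connected subset of any solution is again a solution (for the edge-cut variant this uses that a connected component $Y$ of $G[X]$ has no edges to $X \setminus Y$, so $\ebound{Y} \le \ebound{X}$). The guiding observation is that the interface of $X$ is small in every variant: we always have $\card{N(X)} \le t$, where for the edge-cut version this holds because each vertex of $N(X)$ is incident to at least one edge of $\partial(X)$, so $\card{N(X)} \le \ebound{X} \le t$. Consequently $\card{X \cup N(X)} \le k + t$.

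First I would introduce a two-colouring of $V$ that assigns to each vertex a colour in $\{\mathrm{in}, \mathrm{out}\}$. Call such a colouring \emph{good} for a prospective solution $X$ if every vertex of $X$ receives colour $\mathrm{in}$ and every vertex of $N(X)$ receives colour $\mathrm{out}$. Under a good colouring, any vertex outside $X$ that is adjacent to $X$ lies in $N(X)$ and is therefore coloured $\mathrm{out}$; hence $X$ is a union of connected components of the induced subgraph $G[\mathrm{in}]$ on the $\mathrm{in}$-vertices, and since $X$ is connected it is precisely \emph{one} such component. This suggests the following polynomial-time procedure for a fixed colouring: compute the connected components of $G[\mathrm{in}]$, and for each component $C$ test whether $\card{C} \le k$ together with $\card{N(C)} \le t$ (respectively $\ebound{C} \le t$), and, in the terminal variants, whether $s \in C$. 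If some component passes, report a solution. Correctness in one direction is immediate, as any component passing the size tests is itself a valid $X$; in the other direction, if a solution $X$ exists, then for any colouring good for $X$ the component of $G[\mathrm{in}]$ equal to $X$ passes all tests.

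It remains to guarantee that at least one good colouring is tried. The only set that matters is $X \cup N(X)$, of size at most $k+t$, and the required pattern on it is the single partition into $X$ (coloured $\mathrm{in}$) and $N(X)$ (coloured $\mathrm{out}$). Thus it suffices to enumerate a family of colourings realising every $\{\mathrm{in},\mathrm{out}\}$-pattern on every vertex subset of size at most $k+t$, i.e.\ an $(n, k+t)$-universal set. By the standard derandomization of color-coding \cite{AlonYZ95,cai2006random}, such a family of size $2^{k+t}\cdot (k+t)^{O(\log(k+t))}\cdot \log n$ can be constructed within time of the same order. Running the procedure above for each colouring in the family gives the claimed bound $2^{k+t}\cdot (k+t)^{O(\log(k+t))}\cdot n^{O(1)}$.

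The single point that needs care --- and is really the crux --- is the bound $\card{X \cup N(X)} \le k+t$, since this is exactly what makes a $(k+t)$-universal set suffice and produces the $2^{k+t}$ factor; for the vertex variants it is immediate, and for the edge-cut variant it follows from $\card{N(X)} \le \ebound{X}$. The remaining ingredients are routine: recovering $X$ as a connected component of $G[\mathrm{in}]$ under a good colouring, and the off-the-shelf size estimate for universal sets. I would also note that a purely randomized version, colouring each vertex independently and uniformly at random, is good for a fixed $X$ with probability at least $2^{-(k+t)}$ per trial and yields the cleaner bound $2^{k+t}\cdot n^{O(1)}$ in expectation.
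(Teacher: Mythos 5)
Your proposal is correct and follows essentially the same route as the paper's proof: a random (or universal-set-derandomized) $\{\mathrm{in},\mathrm{out}\}$-colouring that succeeds when $X$ is monochromatically $\mathrm{in}$ and $N(X)$ is $\mathrm{out}$, recovery of $X$ as a connected component of the $\mathrm{in}$-subgraph, and the observation $\card{N(X)}\le\ebound{X}$ to handle the edge-cut variant. The key bound $\card{X\cup N(X)}\le k+t$ and the Naor--Schulman--Srinivasan universal-set size are exactly the ingredients the paper uses.
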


\begin{proof}
We first consider a $2$-colored version of \textsc{Cutting at most $k$ vertices}. That is, we are given a graph $G$ where each vertex is either colored red or blue (this is not required to be a proper coloring), and the task is to find a connected red set $X$ with $\card{X} \le k$ such that $N(X)$ is blue and $\vbound{X} \le t$. If such a set exists, it can be found in polynomial time by trying all maximal connected red sets.

Now let $G = (V,E)$ be a graph. Assume that there is a set $X$ with $\card{X} \le k$ and $\vbound{X} \le t$; we may assume that $X$ is connected. It suffices to find a coloring of $V$ such that $X$ is colored red and $N(X)$ is colored blue. This can be done by coloring each vertex $v$ either red or blue independently and uniformly at random. Indeed, this gives a desired coloring with probability at least $2^{-(k+t)}$, which immediately yields a $2^{k+t}\cdot n^{O(1)}$ time randomized algorithm for \textsc{Cutting at Most  $k$ Vertices}.

This algorithm can be derandomized in standard fashion using universal sets (compare with Cai et al.~\cite{cai2006random}). Recall that \emph{a $(n,\ell)$-universal set} is a collection of binary vectors of length $n$ such that for each index subset of size $\ell$, each of the $2^\ell$ possible combinations of values appears in some vector of the set. A construction of Naor et al.~\cite{naor1995splitters} gives a $(n,\ell)$-universal set of size $2^\ell\cdot \ell^{O(\log \ell)} \log n$ that can be listed in linear time. It suffices to try all colorings induced by a $(n,k+t)$-universal set obtained trough this construction. 

The given algorithm works for \textsc{Cutting at most  $k$ vertices with terminal}  with obvious modifications. That is, given a coloring, we simply check if the terminal $s$ is red and its connected red component is a solution. This also works for \textsc{Cutting at most  $k$ vertices by edge-cut with terminal}, as we have $\vbound{X} \le \ebound{X}$.
\end{proof}


\section{Cutting at most  $k$ vertices parameterized by $t$}

In this section we show that \textsc{Cutting at Most  $k$ Vertices} is fixed-parameter tractable when parameterized by the size of the separator $t$ only. 
Specifically, we will prove the following theorem.

\begin{theorem}\label{thm:svc_fpt}
\textsc{Cutting at Most  $k$ Vertices}  can be solved in time $4^{t}\cdot n^{O(1)}$.
\end{theorem}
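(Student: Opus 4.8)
The plan is to reduce the problem to detecting a small vertex separator that cuts off a small piece, and then to locate such a separator among important separators. First I would observe that if a solution exists then there is a connected solution (take any maximal connected subset of $X$), and that a connected set $X$ is precisely a connected component of $G-N(X)$. Thus the task becomes: decide whether there is a vertex set $S$ with $\card{S}\le t$ such that some connected component $C$ of $G-S$ satisfies $1\le\card{C}\le k$; any such $C$ is a solution, since $N(C)\subseteq S$ gives $\card{N(C)}\le t$. I would dispose of the trivial cases separately (for instance $n\le k$, or the existence of a vertex of degree at most $t$, which already yields a singleton solution).

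Next I would try to realize the separator $S$ as an important separator so as to invoke Lemma~\ref{lemma:computing_imp_seps}. For an ordered pair of non-adjacent vertices $(a,b)$, the important $(b,a)$-separators of size at most $t$ number at most $4^{t}$ and can be listed in time $4^{t}\cdot n^{O(1)}$; for each such separator $T$ I would inspect the connected components of $G-T$ and accept if any of them has between $1$ and $k$ vertices. Ranging over all $O(n^{2})$ pairs keeps the running time at $4^{t}\cdot n^{O(1)}$, and soundness is immediate. The whole difficulty is completeness: showing that whenever a solution exists, some enumerated important separator actually isolates a component of size at most $k$.

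The hard part is exactly here, and it is where the naive strategy breaks down. If one fixes a single vertex $a$ that is supposed to lie in the separated set and asks for the smallest set containing $a$ with boundary at most $t$, one is essentially solving \textsc{Cutting at Most $k$ Vertices with Terminal}, which is \classW{1}-hard in $t$; concretely, an important $(b,a)$-separator for a single far vertex $b$ need not block every boundary vertex of the intended set $X$, so unblocked boundary vertices can let $a$ reach the far side and the isolated component becomes large. To overcome this I would prove the structural lemma announced in the introduction, concerning the unique minimum important separator (Lemma~\ref{lemma:unique_min_sep}) between pairs of vertices. The idea is to choose the solution $X$ so as to first minimize $\card{N(X)}$ and then minimize $\card{X}$, and to show, using the submodularity of $U\mapsto\card{N(U)}$ from~\eqref{eq:vsubm}, that for a suitable $a\in X$ and a suitable vertex $b$ on the far side the set $N(X)$ is the unique minimum important $(b,a)$-separator, and that no separator dominating it can shrink the $a$-side strictly below $X$ without contradicting the minimal choice of $X$; consequently the enumerated important separator isolates a set of size exactly $\card{X}\le k$. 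The most delicate point will be the case in which the far side $V\setminus N[X]$ is disconnected, so that a single vertex $b$ cannot certify minimality of the separator along the whole boundary of $X$; I expect to handle this by treating each component of the far side adjacent to $N(X)$ separately and recombining the resulting bounds through submodularity, exploiting throughout that we are free to return \emph{any} solution rather than one through a prescribed terminal.
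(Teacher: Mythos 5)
Your overall architecture matches the paper's: guess a vertex $a=u$ inside the solution, orient important separators so that the far side is maximized (hence the $u$-side minimized), fix a minimal solution $X$, and use submodularity of $U\mapsto\vbound{U}$ to argue about it. However, there is a genuine gap at exactly the point you flag as ``delicate,'' and your proposed fix does not close it. Enumerating important $(b,a)$-separators for a \emph{single} far-side vertex $b$ cannot work: when $V\setminus(X\cup N(X))$ has several components, an important $(b,a)$-separator only guarantees that the component of $b$ is pushed to be large; the other far-side components can end up on the $a$-side, so no component of $G-T$ need have size at most $k$, for any enumerated $T$ and any choice of the pair. (Your intermediate claim that $N(X)$ is the unique minimum important $(b,a)$-separator for suitable $a,b$ is also false in general --- $N(X)$ need not even be a minimum $(b,a)$-separator, and minimality of the solution $X$ does not make it one.) Saying you will ``treat each component of the far side separately and recombine the bounds through submodularity'' is not an argument: the enumeration itself must be over separators of a \emph{set} $Z$ that meets every relevant far-side component, and the whole difficulty is to identify such a $Z$ among only $f(t)\cdot n^{O(1)}$ candidates.

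The paper's mechanism for this, which is the real content of the proof and is absent from your proposal, is the following. For each $v$ it computes the unique minimum important $(u,v)$-separator $S_v$ and the set $R(v)=R(v,S_v)$, keeps the inclusion-minimal such sets as a family $\X$, and proves (via submodularity) that distinct members of $\X$ are disjoint and that each $A\in\X$ is either contained in $X\cup N(X)$ or disjoint from it. After reducing to the regime $\tfrac34 t<k<n-t$ (so that $\card{X\cup N(X)}\le k+t<(2+\tfrac13)k$ while every surviving $A\in\X$ has $\card{A}>k$), at most two members of $\X$ can lie inside $X\cup N(X)$; hence the union $Z$ of the remaining members can be guessed in $O(n^2)$ ways, and one then enumerates important $(Z,u)$-separators of size at most $t$, among which one is guaranteed to satisfy $\card{R(u,S)}+\card{S}\le k+t$ (after which $R(u,S)$ is trimmed to size $k$ if needed). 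Without this counting argument --- or some replacement for it --- your completeness proof does not go through, so the proposal as written is incomplete.
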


The remainder of this section consists of the proof of Theorem \ref{thm:svc_fpt}. Note that we may assume $ \frac{3}{4}t < k < n - t$. Indeed, if $k \le c t$ for a fixed constant $c < 1$,  then we can apply the algorithm of Theorem \ref{thm:svc_fpt_kt} to solve \textsc{Cutting at Most  $k$ Vertices} in time $4^t n^{O(1)}$. On the other hand, if $k \geq n - t$, then any vertex set $X$ of size $k$ is a solution, as $\vbound{X} \le n - k \le t$.

We start by guessing a vertex $u \in V$ that belongs to a solution set $X$ if one exists; specifically, we can try all choices of $u$. We cannot expect to necessarily find a solution $X$ that contains the chosen vertex $u$, even if the guess is correct, as the terminal variant is \classW{1}-hard. We will nonetheless try; turns out that the only thing that can prevent us from finding a solution containing $u$ is that we find a solution \emph{not} containing $u$.

With $u$ fixed, we compute for each $v \in V \setminus (\{ u \} \cup N(u) )$ the unique minimum important $(u,v)$-separator $S_v$. This can be done in polynomial time by Lemma \ref{lemma:unique_min_sep}. Let $V_0$ be set of those $v$ with $\card{S_v} \le t$, and denote $R(v) = R(v,S_v)$. Finally, let $\X$ be a set family consisting of those $R(v)$ for $v \in V_0$ that are inclusion-minimal,
i.e., if $R(v)\in \X$, then there is no $w\in V_0$ such that $R(w) \subsetneq R(v)$. 
Note that we can compute the sets $V_0$, $R(v)$ and $\X$ in polynomial time.

There are now three possible cases that may occur.
\begin{enumerate}
    \item If $V_0=\emptyset$, we conclude that we have no solution containing $u$.
    \item If there is $v\in V_0$ such that $\card{R(v)}\leq k$, then $X=R(v)$ gives a solution, and we stop and return a YES-answer.
    \item Otherwise, $\X$ is non-empty and for all sets $A \in \X$ we have $\card{A}>k$.
\end{enumerate}
We only have to consider the last case, as otherwise we are done. We will show that in that case, the sets $A \in \X$ can be used to find a solution $X$ containing $u$ if one exists. For this, we need the following structural results about the sets $R(v)$.

\begin{lemma}\label{lemma:R_containment}
For any $v, w \in V_0$, if $w \in R(v)$ then $R(w) \subseteq R(v)$.
\end{lemma}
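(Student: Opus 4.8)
The plan is to set $A = R(v)$ and $B = R(w)$ and prove the equivalent statement $B \subseteq A$. First I would record two elementary facts. Since $S_v$ and $S_w$ are minimal separators, $N(A) = S_v$ and $N(B) = S_w$: the inclusion $N(R(x,S_x)) \subseteq S_x$ is immediate, and any vertex of $S_x$ not adjacent to the component $R(x,S_x)$ could be deleted from $S_x$, contradicting minimality. Moreover, since $w \in R(v)$ lies in the component of $v$ in $G - S_v$ while $u$ does not, $S_v$ is itself a $(u,w)$-separator; in particular $u \notin A \cup B$ and $u \notin N(A) \cup N(B) = S_v \cup S_w$.

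Next I would apply submodularity~\eqref{eq:vsubm} to $A$ and $B$. Because $w \in A \cap B$ and $u \notin (A \cap B) \cup N(A \cap B)$, the set $N(A \cap B)$ is a $(u,w)$-separator, so $\card{N(A\cap B)} \ge \card{S_w}$; symmetrically $v \in A \cup B$ shows that $N(A \cup B)$ is a $(u,v)$-separator, whence $\card{N(A\cup B)} \ge \card{S_v}$. Since $\card{N(A)} + \card{N(B)} = \card{S_v} + \card{S_w}$, inequality~\eqref{eq:vsubm} forces both bounds to be equalities. In particular $T := N(A\cap B)$ is a \emph{minimum} $(u,w)$-separator.

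The heart of the argument is to compare $T$ with the important separator $S_w$. Applying the remark following the definition of important separators to the $(u,w)$-separator $T$ yields an important $(u,w)$-separator of size at most $\card{T} = \card{S_w}$ whose $u$-reachable set contains $R(u,T)$; by Lemma~\ref{lemma:unique_min_sep} this separator must be $S_w$ itself, so $R(u,T) \subseteq R(u,S_w)$. I would then prove the key transfer property: for two minimum $(u,w)$-separators $S,S'$ with $R(u,S') \subseteq R(u,S)$ one has $R(w,S) \subseteq R(w,S')$. The point is that every vertex $y \in R(w,S)$ avoids $S'$ — otherwise $y \in S' = N(R(u,S'))$ would be adjacent to $R(u,S') \subseteq R(u,S)$ and hence lie in $R(u,S) \cup S$, contradicting $y \in R(w,S)$ — so any $w$-path lying in $R(w,S)$ survives in $G - S'$. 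Applying this with $S = S_w$ and $S' = T$ gives $B = R(w,S_w) \subseteq R(w,T)$. Finally, $A \cap B$ is a union of connected components of $G - T$ (its boundary being exactly $T$) and it contains $w$, so $R(w,T) \subseteq A \cap B$. Chaining the inclusions, $B \subseteq R(w,T) \subseteq A \cap B \subseteq A$, which is the claim.

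The main obstacle is the transfer property in the last paragraph, together with getting the use of ``importance'' exactly right. One must keep in mind that importance is measured from the source side $u$, so $S_w$ maximizes $R(u,\cdot)$ and therefore \emph{minimizes} $R(w,\cdot)$ among minimum separators; it is precisely the uniqueness of the minimum important separator in Lemma~\ref{lemma:unique_min_sep} that identifies the associated important separator of $T$ as $S_w$ and thus pins down $R(u,T) \subseteq R(u,S_w)$. The remaining steps — the neighborhood identities, the submodularity equalities, and the component argument bounding $R(w,T)$ — are routine once this comparison is established.
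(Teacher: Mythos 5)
Your proof is correct and follows essentially the same route as the paper: apply submodularity to $A=R(v)$ and $B=R(w)$, observe that $N(A\cap B)$ is a $(u,w)$-separator of size at most $\card{S_w}$, and conclude $B=A\cap B$ from the importance of $S_w$. The only difference is that where the paper simply asserts that $A\cap B\subsetneq B$ would ``witness that $S_w$ is not important,'' you make this step explicit via the uniqueness of the minimum important separator and a reachability-transfer argument --- a careful filling-in of the same idea rather than a different proof.
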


\begin{proof}
Let $A = R(v)$ and $B = R(w)$. Since $S_v = N(A)$ is a $(u,v)$-separator of minimum size, we must have $\vbound{A \cup B} \ge \vbound{A}$. By \eqref{eq:vsubm}, we have
\[ \vbound{A \cap B} \le \vbound{A} + \vbound{B} - \vbound{A \cup B} \le \vbound{B}\,.\]
Because $w \in A$, the set $N(A \cap B)$ is a $(u,w)$-separator.
Thus, if $B \not= A \cap B $, then $N(A \cap B)$ is a $(u,w)$-separator that witnesses that $S_w$ is not an important separator. But this is not possible by the definition of $S_w$, so we have $B = A \cap B \subseteq A$.
\end{proof}

\begin{lemma}\label{lemma:R_disjoint}
Any distinct $A, B \in \X$ are disjoint.
\end{lemma}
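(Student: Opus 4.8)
The plan is to argue by contradiction. Suppose $A = R(v)$ and $B = R(w)$ are distinct members of $\X$ (with $v,w\in V_0$), but $A\cap B\neq\emptyset$; I will derive $A=B$. The heart of the argument is to show that every vertex $x\in A\cap B$ itself lies in $V_0$. Once this is established, two applications of Lemma~\ref{lemma:R_containment} together with the inclusion-minimality defining $\X$ will immediately force $A=B$, the desired contradiction.

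First I would control the size of $N(A\cap B)$ by submodularity, mimicking the proof of Lemma~\ref{lemma:R_containment}. Since $v\in A\cup B$ while $u\notin A\cup B$, the set $N(A\cup B)$ is a $(u,v)$-separator, so $\vbound{A\cup B}\ge\vbound{A}$ because $S_v=N(A)$ has minimum size; symmetrically $\vbound{A\cup B}\ge\vbound{B}$. Feeding this into \eqref{eq:vsubm} gives
\[
\vbound{A\cap B}\le\vbound{A}+\vbound{B}-\vbound{A\cup B}\le\min\bigl(\vbound{A},\vbound{B}\bigr)\le t,
\]
where the last inequality uses $\vbound{A}=\card{S_v}\le t$ and $\vbound{B}=\card{S_w}\le t$.

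Next I would fix any $x\in A\cap B$ and verify $x\in V_0$. Since $x\in A=R(v)$ and $u\notin R(v)$, we have $x\neq u$; moreover $x\notin N(u)$, for otherwise $u$ would be a neighbor of a vertex of $A$ and hence lie in $N(A)=S_v$, which it does not. Thus $x\in V\setminus(\{u\}\cup N(u))$, so $S_x$ is defined. The same neighbor argument shows $u\notin N(A\cap B)$, and $u\notin A\cap B$, so every $x$--$u$ path must cross $N(A\cap B)$; hence $N(A\cap B)$ is a $(u,x)$-separator. Consequently the minimum $(u,x)$-separator, and therefore $S_x$, has size at most $\vbound{A\cap B}\le t$, which places $x$ in $V_0$.

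Finally, since $x\in V_0$ and $x\in R(v)$, Lemma~\ref{lemma:R_containment} yields $R(x)\subseteq R(v)=A$; likewise $x\in R(w)$ gives $R(x)\subseteq R(w)=B$. As $x\in R(x)$, the set $R(x)$ is a nonempty member of $\{R(z):z\in V_0\}$ contained in $A$, so the inclusion-minimality of $A$ in $\X$ forces $R(x)=A$, and the same reasoning forces $R(x)=B$; hence $A=B$, contradicting distinctness. I expect the one genuinely delicate step to be the verification that $x\in V_0$ — in particular, simultaneously checking the membership condition $x\notin\{u\}\cup N(u)$ and bounding $\card{S_x}$ through the submodular estimate on $\vbound{A\cap B}$. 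The remaining steps are routine bookkeeping with Lemma~\ref{lemma:R_containment} and the minimality built into $\X$.
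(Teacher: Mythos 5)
Your proof is correct and follows essentially the same route as the paper: pick a vertex $x\in A\cap B$, show $x\in V_0$, and apply Lemma~\ref{lemma:R_containment} together with the inclusion-minimality defining $\X$ to force $A=B$. The only difference is that your submodularity detour is unnecessary --- since $x\in A$ and $N(A)$ is already a $(u,x)$-separator of size at most $t$ (because $A=R(w)$ for some $w\in V_0$, so $N(A)=S_w$ has size at most $t$), you get $x\in V_0$ directly without needing to bound $\vbound{A\cap B}$.
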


\begin{proof}
Assume that $A, B \in \X$ are distinct and intersect. Then there is $v \in A \cap B$.
Since $v\in A$, the set $N(A)$ is a $(u,v)$-separator of size at most $t$, and $v\in V_0$.
Recall that $\X$ contains inclusion-minimal sets $R(w)$ for $w\in V_0$. 
But by Lemma~\ref{lemma:R_containment},  $R(v)$ is a proper subset of both $A$ and $B$, which is not possible by the definition of $\X$.
\end{proof}

Now assume that the input graph $G$ has a solution for \textsc{Cutting at Most  $k$ Vertices} containing $u$. In particular, then there is an inclusion-minimal set $X \subseteq V$ with $u \in X$ satisfying $\card{X} \le k$ and $\vbound{X} \le t$. Let us fix one such set~$X$.

\begin{lemma}\label{lemma:R_intersects_U}
For all $A \in \X$, the set $A$ is either contained in $X \cup N(X)$ or does not intersect it.
\end{lemma}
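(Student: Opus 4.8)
The plan is to argue by contradiction. Suppose some $A \in \X$ both meets $X \cup N(X)$ and is not contained in it; fix a vertex $z \in A \setminus (X \cup N(X))$. The idea is to convert this partial overlap into a strictly smaller solution through $u$, contradicting the inclusion-minimality of $X$.

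First I would pin down the separator structure at $z$. Since $u$ has no neighbour in $A$ (such a neighbour would place $u$ in $N(A)$, whereas $N(A)$ is a separator avoiding $u$), we have $z \notin \{u\} \cup N(u)$; and as $N(A)$ is a $(u,z)$-separator of size at most $t$, in fact $z \in V_0$. Lemma~\ref{lemma:R_containment} then yields $R(z) \subseteq A$, and since $A$ is inclusion-minimal in $\X$ this forces $R(z) = A$. Consequently $N(A)$ is a \emph{minimum} $(u,z)$-separator, which is precisely what licenses the bound $\vbound{W} \ge \vbound{A}$ for any $(u,z)$-separator $W$.

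The main computation sets $\bar A = V \setminus (A \cup N(A))$, for which $u \in \bar A$ and $N(\bar A) \subseteq N(A)$, and applies submodularity~\eqref{eq:vsubm} to $X$ and $\bar A$:
\[ \vbound{X \cap \bar A} + \vbound{X \cup \bar A} \le \vbound{X} + \vbound{\bar A} \le \vbound{X} + \vbound{A}. \]
Here is the delicate step: I would verify that $N(X \cup \bar A)$ is a $(u,z)$-separator. Indeed $u \in X \cup \bar A$ and $z \notin X \cup \bar A$, while $z$ is adjacent neither to $\bar A$ (the neighbours of $z \in A$ all lie in $N(A)$) nor to $X$ (as $z \notin X \cup N(X)$); hence $z \notin N(X \cup \bar A)$ and the two lie in different components of $G - N(X \cup \bar A)$. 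This gives $\vbound{X \cup \bar A} \ge \vbound{A}$, so the display collapses to $\vbound{X \cap \bar A} \le \vbound{X} \le t$.

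It remains to see that $X \cap \bar A$ is a strictly smaller solution through $u$. It is nonempty and contains $u$, and it is a \emph{proper} subset of $X$: this is where the other half of the overlap hypothesis enters, since $A$ meeting $X \cup N(X)$ forces $X$ to meet $A \cup N(A)$ (if $A$ meets $X$ this is immediate; if instead $w \in A \cap N(X)$, then the neighbour of $w$ in $X$ must lie in $N(A)$). Thus $X \cap \bar A$ is a nonempty proper subset of $X$ containing $u$ with $\vbound{X \cap \bar A} \le t$, contradicting the inclusion-minimality of $X$. I expect the main obstacle to be the middle paragraph --- identifying the complement side $\bar A$ as the right set to feed into submodularity and checking that $N(X \cup \bar A)$ separates $u$ from $z$ --- which is exactly why $z$ must be chosen outside $X \cup N(X)$ and not merely outside $X$.
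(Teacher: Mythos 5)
Your proof is correct, but it uncrosses a different pair of sets than the paper does, so the two arguments are worth contrasting. The paper also fixes a vertex $v$ of $A$ outside $X\cup N(X)$ and deduces $R(v)=A$ from Lemma~\ref{lemma:R_containment}, but it then applies submodularity to the pair $(A,Y)$ with $Y=V\setminus(X\cup N(X))$: since $N(A\cap Y)$ is a $(u,v)$-separator and $N(A)$ is a minimum one, it concludes $\vbound{A\cup Y}\le\vbound{Y}$, and the smaller solution it produces is $X'=X\setminus(A\cup Y\cup N(A\cup Y))$, whose bound $\vbound{X'}\le\vbound{X}$ requires a separate containment argument about where the new neighbours of $X'$ can lie. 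You instead apply submodularity to $(X,\bar A)$ with $\bar A=V\setminus(A\cup N(A))$, use the minimality of $N(A)=S_z$ to lower-bound $\vbound{X\cup\bar A}$ (after checking that $N(X\cup\bar A)$ really is a $(u,z)$-separator, which is the one delicate point and is exactly where $z\notin X\cup N(X)$ is needed), and then read $\vbound{X\cap\bar A}\le\vbound{X}$ directly off the submodular inequality; your smaller solution $X\cap\bar A=X\setminus(A\cup N(A))$ needs no further neighbourhood bookkeeping. The price is the explicit verification of the separator property and the short argument that $X$ meets $A\cup N(A)$ (so that $X\cap\bar A$ is a proper subset), both of which you carry out correctly; you also make explicit that $z\in V_0$ and hence that Lemma~\ref{lemma:R_containment} applies, a point the paper leaves implicit. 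Both arguments ultimately rest on the same two pillars---$R(z)=A$ and the inclusion-minimality of $X$---but yours is a genuine dual of the paper's uncrossing and is, if anything, slightly cleaner.
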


\begin{proof}
Suppose that there is a set $A \in \X$ that intersects both $X \cup N(X)$ and its complement. Let $Y = V \setminus (X \cup N(X))$.

Now let $v \in A \cap Y$. By Lemma \ref{lemma:R_containment}, we have $R(v) = A$. 
If $\vbound{A \cup Y} > \vbound{Y}$ then it follows from \eqref{eq:vsubm} that
\[ \vbound{A \cap Y} \le \vbound{A} + \vbound{Y} - \vbound{A \cup Y} < \vbound{A}\,.\]
However, this would imply that $N(A \cap Y)$ is a $(u,v)$-separator smaller than $S_v = N(A)$.

Thus, we have $\vbound{A \cup Y} \le \vbound{Y}$. But $X' = X \setminus (A \cup Y \cup N(A \cup Y))$ is a proper subset of $X$; furthermore, any vertex of $N(X')$ that is not in $N(A \cup Y)$ is also in $N(X) \setminus N(Y)$, so we have $\vbound{X'} \le \vbound{X} \le t$. This is in contradiction with the minimality of $X$.
\end{proof}

\begin{lemma}\label{lemma:finding_solution}
Let $Z$ be the union of all $A \in \X$ that do not intersect $X \cup N(X)$. Then $Z \not= \emptyset$ and there is an important $(Z,u)$-separator $S$ of size at most $t$ such that $\card{R(u,S)} + \card{S} \le k + t$.
\end{lemma}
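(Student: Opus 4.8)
The plan is to establish the two assertions of the lemma in turn: first that $Z\neq\emptyset$, and then the existence of the desired important $(Z,u)$-separator satisfying the size bound. Throughout I would write $Y=V\setminus(X\cup N(X))$, so that $\card{Y}\geq n-\card{X}-\vbound{X}\geq n-k-t>0$ by the standing assumption $k<n-t$; in particular $Y$ is nonempty. I would also record that $N(u)\subseteq X\cup N(X)$ since $u\in X$, so $u$ has no neighbour in $Y$.

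First I would show that every $v\in Y$ lies in $V_0$ and, more importantly, that $R(v)\subseteq Y$. Membership in $V_0$ is immediate: for $v\in Y$ we have $v\neq u$ and $v\notin N(u)$, while $N(X)$ is a $(u,v)$-separator of size at most $t$, so the minimum important $(u,v)$-separator $S_v$ satisfies $\card{S_v}\leq t$. To see $R(v)\subseteq Y$, I would argue exactly as in the proof of Lemma~\ref{lemma:R_intersects_U}. Writing $A=R(v)$, the set $N(A\cap Y)$ is again a $(u,v)$-separator (using $v\in A\cap Y$ and that $u$ has no neighbour in $Y$), so minimality of $S_v=N(A)$ gives $\vbound{A\cap Y}\geq\vbound{A}$; submodularity \eqref{eq:vsubm} then yields $\vbound{A\cup Y}\leq\vbound{Y}$. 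If $A\not\subseteq Y$, the set $X'=X\setminus(A\cup Y\cup N(A\cup Y))$ is a proper subset of $X$ that still contains $u$ (here one uses that $u$ is adjacent neither to $A=R(v)$ nor to $Y$) and satisfies $\vbound{X'}\leq\vbound{X}\leq t$, contradicting the minimality of $X$. Hence $R(v)\subseteq Y$. Choosing an inclusion-minimal member of $\X$ contained in $R(v)$, which exists and lies in $Y$, shows $Z\neq\emptyset$; letting $v$ range over all of $Y$ in fact shows that $Z$ meets every connected component of $G[Y]$, a fact I will need below.

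For the second assertion, observe that since $Z\subseteq Y$ and $u\in X$, the set $N(X)$ is a $(Z,u)$-separator of size at most $t$. By the domination property following the definition of important separators, there is an important $(Z,u)$-separator $S$ with $\card{S}\leq\vbound{X}\leq t$ and $R(Z,S)\supseteq R(Z,N(X))$. Let $R=R(u,S)$; since an important separator is minimal, $N(R)=S$ and therefore $\card{R}+\card{S}=\card{R\cup N(R)}$. It then suffices to show $R\cup N(R)\subseteq X\cup N(X)$, since $\card{X\cup N(X)}\leq k+t$.

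The main obstacle is precisely this last containment, i.e.\ showing that the $u$-side $R=R(u,S)$ does not leak into $Y$. Here the fact that $Z$ meets every component of $G[Y]$ is the key lever: if some component $Y_i$ of $G[Y]$ met $R$, then, since $Z\cap Y_i\neq\emptyset$ lies on the far side of $S$, the separator $S$ would have to cut $Y_i$ internally, i.e.\ $S\cap Y_i\neq\emptyset$. I would rule this out by an exchange argument against the choice of $S$: removing the leaked part $R\cap Y_i$ from the $u$-side yields the $(Z,u)$-separator $N(R\setminus Y_i)$, whose $Z$-side strictly contains $R(Z,S)$ (the leaked vertices now rejoin $Z$ through $Y_i$), and a submodularity computation shows its size does not exceed $\card{S}$; this would dominate $S$, contradicting that $S$ is important. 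This forces $R\cap Y=\emptyset$, and then $N(R)=S$ together with minimality gives $R\cup N(R)\subseteq X\cup N(X)$, so $\card{R}+\card{S}\leq k+t$. I expect the bookkeeping in this exchange step---verifying that the modified separator is genuinely no larger while its reachable set genuinely grows---to be the delicate part, mirroring but going slightly beyond the uncrossing arguments already used in Lemmas~\ref{lemma:R_containment}--\ref{lemma:R_intersects_U}.
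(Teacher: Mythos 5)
Your first half is sound and essentially reproves what the paper needs: for every $v\in Y=V\setminus(X\cup N(X))$ you get $v\in V_0$ and $R(v)\subseteq Y$ by an uncrossing against the minimality of $X$ (the paper does the same step with a slightly different uncrossing, intersecting $X$ with $R(u,S_v)$ rather than intersecting $R(v)$ with $Y$, but both work). From this you correctly conclude $Z\neq\emptyset$ and that $Z$ meets every component of $G[Y]$, which is exactly the statement $R(Z,N(X))=Y$.

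The gap is in the second half, and it is precisely at the step you yourself flag as delicate. Having invoked domination to get an important $(Z,u)$-separator $S$ with $\card{S}\le t$ and $R(Z,S)\supseteq R(Z,N(X))$, you then abandon that containment and try to prove $R(u,S)\cap Y=\emptyset$ by a local exchange: replace $S$ by $N(R\setminus Y_i)$ and claim ``a submodularity computation shows its size does not exceed $\card{S}$.'' That computation is not supplied, and as set up it does not go through: $N(R\setminus Y_i)$ contains every vertex of $R\cap Y_i$ that has a neighbour in $R\setminus Y_i$ (these neighbours sit in $N(X)\cap R$), and nothing bounds the number of such vertices by the number of separator vertices you save; e.g.\ if $R\cap Y_i$ is large and all of it is adjacent to a single vertex of $N(X)\cap R$, the proposed separator is much larger than $S$. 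Moreover, the strict growth of the $Z$-side is also unclear, since the leaked vertices may all land inside the new separator. The irony is that no exchange is needed: you have already shown $R(Z,N(X))=Y$, so domination gives $R(Z,S)\supseteq Y$; since $R(u,S)$ and $S$ are both disjoint from $R(Z,S)$, they are disjoint from $Y$, hence $R(u,S)\cup S\subseteq X\cup N(X)$ and $\card{R(u,S)}+\card{S}\le k+t$. This one-line finish is exactly what the paper does; with it your argument closes, and without it the proof as written is incomplete.
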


\begin{proof}
Let $S = N(X)$. Consider an arbitrary $v \in V \setminus (X \cup S)$; such vertex exists, since $k + t < n$. Since $S$ separates $v$ from $u$, the set $R(v)$ is well-defined.

Suppose now that $R(v)$ is not contained in $R(v,S)$. Let $B = R(u,S_v)$. Since $S_v$ is a minimum-size $(u,v)$-separator we have $\vbound{B} = \vbound{R(v)}$. But $N(X \cup B)$ also separates $u$ and $v$, so we have $\vbound{X \cup B} \ge \vbound{R(v)} = \vbound{B}$. By \eqref{eq:vsubm}, we have
\[ \vbound{X \cap B} \le \vbound{X} + \vbound{B} - \vbound{ X \cup B} \le \vbound{X} \le t\,.\]
But since $R(v)$ is not contained $R(v,S)$, it follows that $X \cap B$ is a proper subset of $X$, which contradicts the minimality of $X$.

Thus we have $R(v) \subseteq R(v,S)$. It follows that $R(v,S)$ contains a set $A \in \X$, which implies that $Z \not= \emptyset$ and $v \in R(A,S) \subseteq R(Z,S)$. Furthermore, since $v \in V \setminus (X \cup S)$ was chosen arbitrarily, we have that $R(Z,S) = V \setminus (X \cup S)$.

If $S$ is an important $(Z,u)$-separator, we are done. Otherwise, there is an important $(Z,u)$-separator $T$ with $\card{T} \le \card{S}$ and $R(Z,S) \subseteq R(Z,T)$. But then we have $\card{T} \le t$, and $R(u,T) \cup T \subseteq X \cup S$, that is, $\card{R(u,T) \cup T} \le k + t$.
\end{proof}

Recall now that we may assume $\card{A} > k$ for all $A \in \X$. Furthermore, we have $\card{X \cup N(X)} \le k + t < \left(2 + \frac{1}{3}\right) k$ and the sets $A \in \X$ are disjoint by Lemma \ref{lemma:R_disjoint}. Thus, at most two sets $A \in \X$ fit inside $X \cup N(X)$ by Lemma \ref{lemma:R_intersects_U}. This means that if we let $Z$ be the union of all $A \in \X$ that do not intersect $X \cup N(X)$, then as we have already computed $\X$, we can guess $Z$ by trying all $O(n^2)$ possible choices.

Assume now that $X$ is a minimal solution containing $u$ and our guess for $Z$ is correct. We enumerate all important $(Z,u)$-separators of size at most $t$. We will find by Lemma \ref{lemma:finding_solution} an important $(Z,u)$-separator $S$ such that $\card{S} \le t$ and $\card{R(u,S)} + \card{S} \le k + t$. If $\card{R(u,S)} \le k$, we have found a solution.
Otherwise, we delete a set $S'$ of $\card{R(u,S)}-k$ elements from $R(u,S)$ to obtain a solution $X'$. To see that this suffices, observe that $N(X')\subseteq S'\cup S$. Therefore, $\card{N(X')}\leq \card{S'}+\card{S}=\card{R(u,S)}-k+\card{S} \leq t$. 
As all important $(Z,u)$-separators can be listed in time $4^t\cdot n^{O(1)}$ by Lemma \ref{lemma:computing_imp_seps}, the proof of Theorem \ref{thm:svc_fpt} is complete.



\section{Hardness results}
We start this section by complementing Theorem~\ref{thm:svc_fpt}, as we show that \textsc{Cutting at Most  $k$ Vertices} is \classNP-complete and \classW{1}-hard when parameterized by $k$. We also show that same holds for
\textsc{Cutting at Most  $k$ Vertices with Terminal}. Note that both of these problems are in \classXP{} when parameterized by $k$, as they can be solved by checking all vertex subsets of size at most $k$.

\begin{theorem}\label{thm:svc_is_hard}
\textsc{Cutting at Most  $k$ Vertices} and \textsc{Cutting at Most  $k$ Vertices with Terminal} are \classNP-complete and \classW{1}-hard with the parameter~$k$.
\end{theorem}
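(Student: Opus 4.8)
The plan is to prove both membership in \classNP{} and \classW{1}-hardness (which simultaneously gives \classNP-hardness) by a single reduction from a known \classW{1}-hard problem. Membership in \classNP{} is routine: a set $X$ with $\card{X}\le k$ is a certificate of polynomial size, and verifying $\card{X}\le k$ and $\card{N(X)}\le t$ is done in polynomial time. The substance of the theorem is the hardness, and the natural source problem is \textsc{Clique} (or equivalently \textsc{Independent Set}) parameterized by solution size, which the preliminaries already single out as \classW{1}-complete; Marx's hardness proof for \textsc{Cutting $k$ Vertices} reduces from \textsc{Clique}, so I would expect a similar gadget to work here.

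The key idea I would pursue is to encode a graph $H$ in which we seek a clique of size $\ell$ by building an auxiliary graph $G$ whose ``cheap to separate'' vertex sets correspond exactly to the edge/vertex sets of an $\ell$-clique. A clean approach is to take the incidence structure of $H$: introduce a vertex $v_e$ for each edge $e$ of $H$, and make $v_e$ adjacent to (vertices representing) its two endpoints, possibly with high-multiplicity ``cloud'' gadgets attached to each endpoint-vertex so that the neighborhood $N(X)$ of a candidate set $X$ is forced to consist precisely of the endpoint-vertices touched. If $X$ is chosen to be the set of $\binom{\ell}{2}$ edge-vertices of an $\ell$-clique, then the vertices of $H$ appearing as neighbors are exactly the $\ell$ clique-vertices, so $N(X)$ has size $\ell$; conversely, any set $X$ of $\binom{\ell}{2}$ edge-vertices whose neighborhood has size at most $\ell$ must, by a counting argument (each vertex covers at most $\binom{\ell}{2}$ edges among $\ell$ vertices, with equality iff those $\ell$ vertices form a clique), arise from a clique. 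One then sets $k = \binom{\ell}{2}$ and $t = \ell$, and both are functions of $\ell$ alone, giving a valid parameterized reduction with parameter $k$. For the terminal variant I would additionally attach the terminal $s$ to the edge-vertices (for instance as a universal vertex over the edge-gadgets, or by making $s$ itself forced into any minimal solution), so that a solution containing $s$ still corresponds to a clique.

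The main technical points to get right, in order, are: (i) the ``forcing'' gadgetry that guarantees an optimal $X$ consists of exactly $\binom{\ell}{2}$ edge-vertices and that its neighborhood is exactly the set of incident original vertices---this is where pendant clouds or weighting via parallel copies (simulated by many degree-one vertices, since the graph must be simple and loopless as stated in the preliminaries) are needed to prevent the separator from ``cheating'' by cutting elsewhere; (ii) the two-directional equivalence between small-neighborhood sets and cliques, resting on the extremal edge-count identity; and (iii) checking that the terminal attachment does not create an unintended cheap cut that ignores $H$.

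I expect the main obstacle to be point (i): ensuring that no \emph{smaller} or structurally different set $X$ of size at most $k$ achieves neighborhood at most $t$ by exploiting the gadgets rather than a genuine clique. Because the problem allows any non-empty $X$ with $\card{X}\le k$ (not exactly $k$), I must rule out degenerate small solutions---e.g. a single low-degree gadget vertex---by making every ``honest'' vertex have neighborhood larger than $t$ unless it participates in the clique encoding. This typically forces the clouds to be large (size $>t$) so that separating any single original-graph vertex is too expensive, thereby channeling all cheap cuts through the intended clique structure. Once the gadget sizes are calibrated so that the only sets with $\card{N(X)}\le \ell$ are unions of edge-vertices of a clique, the equivalence and the parameter bounds follow directly, and the same construction serves both the plain and the terminal versions.
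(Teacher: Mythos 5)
Your choice of source problem (\textsc{Clique}), the idea of taking $X$ to be the $\binom{\ell}{2}$ edge-vertices of a clique, and the counting argument in the reverse direction all match the paper's proof. You also correctly identify the main obstacle: degenerate solutions such as a single edge-vertex, which has only its two endpoints as neighbors. But your proposed fix cannot work, and the reason is structural rather than a matter of calibrating gadget sizes. You set $t = \ell$, so that \emph{both} output parameters are functions of the \textsc{Clique} parameter alone. Any such reduction would be a parameterized reduction with respect to $t$ (and with respect to $k+t$), and would therefore establish \classW{1}-hardness of \textsc{Cutting at Most $k$ Vertices} parameterized by $t$ --- contradicting Theorem~\ref{thm:svc_fpt} (and Theorem~\ref{thm:svc_fpt_kt}) of this very paper, which give \classFPT{} algorithms for those parameterizations. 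So no amount of cloud gadgetry can rescue the plan while keeping $t$ bounded by a function of $\ell$: any gadget that inflates the neighborhood of an individual edge-vertex enough to kill the degenerate one-vertex solutions necessarily inflates the neighborhood of the \emph{intended} solution as well, forcing $t$ to grow with the instance.

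The paper's construction embraces this: it makes the $m$ edge-vertices into a clique $H_E$ (and puts the vertex-vertices into a huge clique $H_V$ of size $n^3$ to forbid including them in $X$, which plays the role of your clouds), and sets $k' = \binom{k}{2}$ but $t = k + m - \binom{k}{2}$. Because $H_E$ is a clique, any $X \subseteq V(H_E)$ automatically has the other $m - \card{X}$ edge-vertices in its neighborhood, so taking $X$ too small costs at least $m - \binom{k}{2} + k + 1 > t$ once one accounts for at least one $H_V$-neighbor, and taking $\binom{k-1}{2} < \card{X} < \binom{k}{2}$ forces at least $k$ neighbors in $H_V$, again exceeding $t$. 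Only $\card{X} = \binom{k}{2}$ with exactly $k$ neighbors in $H_V$ survives, which forces a $k$-clique. Since the theorem only claims hardness for the parameter $k$, it is perfectly legitimate for $t$ to depend on $m$; this is the one adjustment your argument needs, and it changes the case analysis from your extremal edge-count identity to the two-threshold computation above.
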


\begin{proof}
We prove the \classW{1}-hardness claim for \textsc{Cutting at Most  $k$ Vertices} by a reduction from  \textsc{Clique}. 
Recall that this \classW{1}-complete (see~\cite{DowneyF99}) problem asks for a graph $G$ and a positive integer $k$ where $k$ is a parameter, whether $G$ contains a clique of size $k$.
Let $(G,k)$ be an instance of \textsc{Clique}, $n=\card{V(G)}$ and $m=\card{E(G)}$; we construct an instance $(G',k',t)$ of \textsc{Cutting at Most  $k$ Vertices} as follows. Let $H_V$ be a clique of size $n^3$ and identify $n$ vertices of $H_V$ with the vertices of $G$. Let $H_E$ be a clique of size $m$ and identify the vertices of $H_E$ with the edges of $G$. Finally, add an edge between vertex $v$ of $H_V$ and vertex $e$ of $H_E$ whenever $v$ is incident to $e$ in $G$. Set $k' = \binom{k}{2}$ and $t = k + m - \binom{k}{2}$. The construction is shown in Fig.~\ref{fig:W} a).

If $G$ has a $k$-clique $K$, then for the set $X$ that consists of the vertices $e$ of $H_E$ corresponding to edges of $K$ we have $\card{X} = \binom{k}{2}$ and $\vboundG{G'}{X} = k + m - \binom{k}{2}$. On the other hand, suppose that there is a set of vertices $X$ of $G'$ such that $\card{X}\leq k'$ and $\vboundG{G'}{X}\leq t$.
 First, we note that $X$ cannot contain any vertices of $H_V$, as then $N_{G'}(X)$ would be too large. Thus, the set $X$ consists of vertices of $H_E$. Furthermore, we have that $\card{X} = \binom{k}{2}$. Indeed, assume that this is not the case. If $\card{X} \le \binom{k-1}{2} = \binom{k}{2} - k$, then, since $X$ has at least one neighbor in $H_V$, we have
\[\vboundG{G'}{X} \ge m - \card{X} + 1 \ge m - \binom{k}{2} + k + 1\,,\]
and if $\binom{k-1}{2} < \card{X} < \binom{k}{2}$, then $X$ has at least $k$ neighbors in $H_V$, and thus
\[\vboundG{G'}{X} \ge m - \card{X} + k > m - \binom{k}{2} + k\,.\]
Thus, we have that $X$ only consist of vertices of $H_E$ and $\card{X} = \binom{k}{2}$. But then the vertices of $H_V$ that are in $N_{G'}(X)$ form a $k$-clique in $G$.

The \classW{1}-hardness proof for \textsc{Cutting at Most  $k$ Vertices with Terminal} uses the same arguments. The only difference is that we add the terminal $s$ in the clique $H_E$ and let $k'=\binom{k}{2}+1$ (see Fig.~\ref{fig:W} b).

Because \textsc{Clique} is well known to be \classNP-complete~\cite{GareyJ79} and our parameterized reductions are polynomial in $k$, it immediately follows that \textsc{Cutting at Most  $k$ Vertices} and \textsc{Cutting at Most  $k$ Vertices with Terminal} are \classNP-complete.
\end{proof}

While we have an \classFPT-algorithm for  \textsc{Cutting at Most  $k$ Vertices} when parameterized by $k$ and $t$ or by $t$ only, it is unlikely that the problem has a polynomial kernel (we refer to~\cite{DowneyF99,FlumG06,Niedermeierbook06} for the formal definitions of kernels). Let $G$ be a graph with $s$ connected components $G_1,\ldots,G_s$, and let $k\geq 1$, $t\geq 0$ be integers. Now $(G,k,t)$ is a YES-instance of  \textsc{Cutting at Most  $k$ Vertices} if and only if $(G_i,k,t)$ is a YES-instance for some $i\in\{1,\ldots,s\}$, because 
it can always be assumed that a solution is connected.
By the results of Bodlaender et al.~\cite{BodlaenderDFH09}, this together with Theorem~\ref{thm:svc_is_hard} implies the following.

\begin{theorem}\label{prop:kernel}
 \textsc{Cutting at Most  $k$ Vertices} has no polynomial kernel when parameterized either by $k$ and $t$ or by $t$ only, unless $\classNP\subseteq\text{\rm coNP}/\text{\rm poly}$.
\end{theorem}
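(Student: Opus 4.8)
The plan is to rule out a polynomial kernel via the composition framework of Bodlaender et al.~\cite{BodlaenderDFH09}: a parameterized problem whose classical (unparameterized) version is \classNP-complete cannot have a polynomial kernel unless $\classNP \subseteq \mathrm{coNP}/\mathrm{poly}$, provided it admits an \emph{OR-composition}---a polynomial-time algorithm that takes several instances sharing a common parameter value and produces one instance, with parameter polynomially bounded in the common value, that is a YES-instance precisely when at least one input instance is. Theorem~\ref{thm:svc_is_hard} already supplies the required \classNP-completeness of \textsc{Cutting at Most $k$ Vertices}, so all that remains is to build such a composition.

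For the composition I would use exactly the disjoint-union observation recorded before the statement. Taking the parameter to be the pair $(k,t)$ (which is polynomially related to $k+t$ and hence interchangeable for kernelization purposes), a family of instances sharing the parameter is a family $(G_1,k,t),\dots,(G_r,k,t)$ agreeing on both $k$ and $t$. I would output $(G,k,t)$ where $G = G_1 \cup \dots \cup G_r$ is the disjoint union; this runs in polynomial time and leaves the parameter unchanged. Correctness is the stated equivalence: a solution of some $(G_i,k,t)$ is a solution of $(G,k,t)$ because neighborhoods are taken inside the component $G_i$, and conversely any solution of $(G,k,t)$ may be assumed connected, hence lives inside a single $G_i$ and solves $(G_i,k,t)$. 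This establishes the lower bound for the parameterization by $k$ and $t$.

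The one point that needs care---and the step I expect to be the real obstacle---is transferring the conclusion to the parameterization by $t$ alone, since there the composition is only handed instances agreeing on $t$, not on $k$, and the disjoint union of graphs with differing $k$ is not defined. I would sidestep this by a monotonicity argument rather than reworking the composition: a hypothetical polynomial kernel for the $t$-parameterization maps $(G,k,t)$ to an instance of size $\mathrm{poly}(t) \le \mathrm{poly}(k+t)$, and is therefore also a polynomial kernel for the $k+t$-parameterization. Contrapositively, the lower bound just proved for $k$ and $t$ immediately yields the lower bound for $t$ alone. (Alternatively, one could phrase the whole argument as a cross-composition, declaring two instances equivalent when they share both $k$ and $t$; since $k$ and $t$ are bounded by the input size this gives polynomially many classes, within each of which the disjoint union applies verbatim.) Combining the composition with Theorem~\ref{thm:svc_is_hard} then gives Theorem~\ref{prop:kernel}.
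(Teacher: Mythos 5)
Your proposal is correct and takes essentially the same route as the paper: the paper likewise observes that a solution may be assumed connected, so \textsc{Cutting at Most $k$ Vertices} OR-composes over disjoint unions of graphs sharing $k$ and $t$, and invokes Bodlaender et al.\ together with the \classNP-completeness from Theorem~\ref{thm:svc_is_hard}. The paper leaves the transfer to the parameterization by $t$ alone implicit, and your monotonicity argument (a polynomial kernel in $t$ would be a polynomial kernel in $k+t$) is a valid way to fill in that detail.
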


\begin{figure}[ht]
\centering\scalebox{0.6}{\input{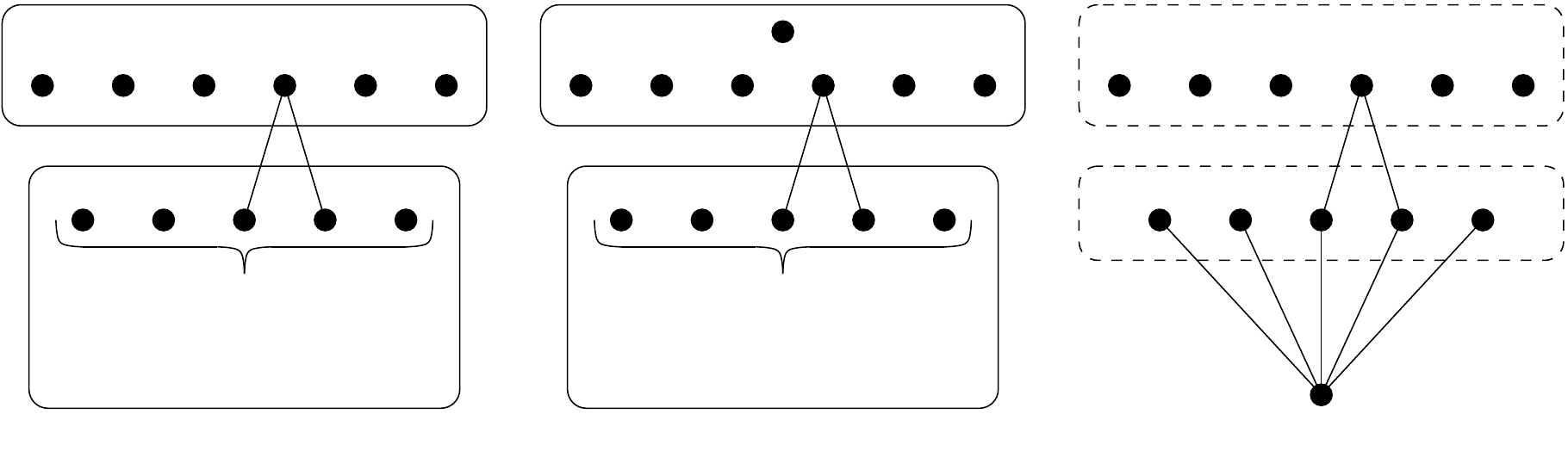_t}}
\caption{Constructions of $G'$ in the proofs of Theorems~\ref{thm:svc_is_hard} and \ref{thm:svc_terminal_hard}.
\label{fig:W}}
\end{figure}

We will next show that when we consider the size of the separator $t$ as the sole parameter, adding a terminal makes the problem harder. Indeed, while \textsc{Cutting at Most  $k$ Vertices with Terminal} with parameter $t$ is trivially in XP, we next show that it is also \classW{1}-hard, in contrast to Theorem~\ref{thm:svc_fpt}.

\begin{theorem}\label{thm:svc_terminal_hard}
\textsc{Cutting at Most  $k$ Vertices with Terminal} is \classW{1}-hard with parameter $t$.
\end{theorem}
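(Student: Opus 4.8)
The plan is to establish \classW{1}-hardness by a parameterized reduction from \textsc{Clique}, which is \classW{1}-complete parameterized by the clique size $k$~\cite{DowneyF99}, arranging the target parameter to satisfy $t = k$; since $t$ is then bounded by a function of the source parameter, this is a valid \classFPT-reduction for the parameter $t$. Given a \textsc{Clique} instance $(G,k)$ with $n = \card{V(G)}$ and $m = \card{E(G)}$, I would build $G'$ from the vertex--edge incidence graph of $G$: introduce a \emph{vertex-vertex} for each $v \in V(G)$ and an \emph{edge-vertex} for each $e \in E(G)$, join each edge-vertex $e = uv$ to the two vertex-vertices $u$ and $v$, and finally add a terminal $s$ adjacent to all $n$ vertex-vertices (as in Fig.~\ref{fig:W} c). I then set the separator bound to $t = k$ and the size bound to $k' = n + m + 1 - k - \binom{k}{2}$. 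The construction is clearly computable in polynomial time.

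For the forward direction, given a $k$-clique $K$ in $G$ I would take $X$ to consist of everything except $K$ and the $\binom{k}{2}$ edge-vertices spanned by $K$; that is, $X = \{s\} \cup (V(G) \setminus K) \cup (E(G) \setminus E(K))$. Then $\card{X} = k'$, and the only vertices outside $X$ are the $k$ vertices of $K$, each adjacent to $s \in X$, and the clique edge-vertices, whose only neighbours are their two endpoints in $K$ and which therefore lie entirely outside $X$. Consequently $N(X) = K$, so $\card{N(X)} = k = t$, giving a valid solution containing $s$.

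The core of the argument is the converse. Suppose $X \ni s$ is any solution, so $\card{X} \le k'$ and $\card{N(X)} \le t = k$. Writing $O = V(G') \setminus (X \cup N(X))$ for the far side, the size bound forces $\card{N(X)} + \card{O} = n + m + 1 - \card{X} \ge k + \binom{k}{2}$, hence $\card{O} \ge \binom{k}{2}$. The key structural observation is that, because $s$ is adjacent to every vertex-vertex and $s \in X$, no vertex-vertex can lie in $O$, so $O$ consists entirely of edge-vertices; moreover each such edge-vertex must have \emph{both} of its endpoints outside $X$, i.e.\ in the set $S = N(X) \cap V(G)$, which satisfies $\card{S} \le \card{N(X)} \le k$. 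Therefore every vertex of $O$ is an edge spanned by $S$, giving $\binom{k}{2} \le \card{O} \le \binom{\card{S}}{2} \le \binom{k}{2}$. This chain forces $\card{S} = k$ and that $S$ spans exactly $\binom{k}{2}$ edges, i.e.\ $S$ is a $k$-clique in $G$.

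The step I expect to require the most care is calibrating the two budgets $t$ and $k'$ so that this extremal argument is exactly tight: the whole reduction rests on the elementary but essential fact that $\binom{k}{2}$ edges span at least $k$ vertices, with equality precisely when they form a clique, and the bounds must be chosen so that a cheap ($t$-sized) separator exists if and only if such an extremal configuration is present. Making the terminal adjacent to all of $V(G)$ is the device that prevents vertex-vertices from escaping to the far side and thereby channels the entire far side into edge-vertices, which is what allows the endpoint count on $S$ to be read off as a clique certificate. A final routine check is that $\card{X} \le k'$ genuinely supplies the needed \emph{lower} bound $\card{O} \ge \binom{k}{2}$ on the far side, so that the upper bound on the separator does the rest.
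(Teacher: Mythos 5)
Your proposal is correct and matches the paper's proof essentially verbatim: the same incidence-graph construction with a terminal adjacent to all vertex-vertices, the same parameters $t=k$ and $k'=n-k+m-\binom{k}{2}+1$, and the same counting argument forcing the far side to consist of $\binom{k}{2}$ edge-vertices spanned by the at most $k$ separator vertices. You merely spell out the final extremal step ($\binom{k}{2}\le\card{O}\le\binom{\card{S}}{2}\le\binom{k}{2}$) a bit more explicitly than the paper does.
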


\begin{proof}
Again, we prove the claim by a reduction from \textsc{Clique}. Let $(G,k)$ be a clique instance, $n=|V(G)|$ and $m=|E(G)|$; we create an instance $(G', k', t, s)$ of \textsc{Cutting at Most  $k$ Vertices with Terminal}. The graph $G'$ is constructed as follows. Create a new vertex $s$ as the terminal. For each vertex and edge of $G$, add a corresponding vertex to $G'$, and add an edge between vertices $v$ and $e$ in $G'$ when $e$ is incident to $v$ in $G$. Finally, connect all vertices of $G'$ corresponding to vertices of $G$ to the terminal $s$, and set $k' = n - k + m - \binom{k}{2} + 1$ and $t = k$. The construction is shown in Fig.~\ref{fig:W} c).

If $G$ has a $k$-clique $K$, then cutting away the $k$ vertices of $G'$ corresponding to $K$ leaves exactly $n - k + m - \binom{k}{2}+1$ vertices in the connected component of $G'-K$ containing $s$. Now suppose that $X \subseteq V(G')$ is a set with $s \in X$ such that $\card{X} \le k'$ and $\vboundG{G'}{X} \le t$, and let $S = N_{G'}(X)$. Note that the elements of $V(G')$ that do not belong to $X$ are exactly the elements $v \in S$ and the vertices corresponding to $e = uv$ such that $u, v \in S$ and $e \notin S$; denote this latter set of elements by $E_0$. Since $X$ is a solution, we have $\card{S} + \card{E_0} \ge \binom{k}{2} + k$, and thus $\card{E_0} \ge \binom{k}{2}$. But this is only possible if $S$ is a $k$-clique in $G$.
\end{proof}

Finally, we show that \textsc{Cutting at most  $k$ vertices by edge-cut with terminal} is also \classNP-hard.

\begin{theorem}\label{thm:sec_terminal_np}
\textsc{Cutting at most  $k$ vertices by edge-cut with terminal} is \classNP-complete.
\end{theorem}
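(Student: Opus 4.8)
The plan is to prove \classNP-hardness by a reduction from \textsc{Clique}; membership in \classNP{} is clear, since one can verify in polynomial time that a given $X$ contains $s$, satisfies $\card{X}\le k$, and has $\ebound{X}\le t$. Let $(G,k)$ be a \textsc{Clique} instance with $n$ vertices and $m$ edges, where we may assume $m\ge\binom{k}{2}$ (otherwise $G$ has no $k$-clique and the instance is trivial). Following the pattern of the incidence-graph construction used for Theorem~\ref{thm:svc_terminal_hard}, I introduce a vertex $a_v$ for each $v\in V(G)$, a vertex $b_e$ for each $e\in E(G)$, and edges $a_vb_e$ whenever $v$ is incident to $e$. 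The terminal, however, is replaced by a large rigid gadget: let $S_0$ be a clique on $L=2m+2$ vertices, fix a vertex $s\in S_0$ as the terminal, and join each $a_v$ to exactly $d_G(v)$ of the vertices of $S_0$. Finally set $t=2m-k(k-1)$ and $k'=L+k+\binom{k}{2}$.

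The heart of the argument is a single clean cut computation. Because $\card{S_0}-1=2m+1>t$ and splitting the clique $S_0$ into two nonempty parts costs at least $L-1$ boundary edges, any solution must contain all of $S_0$; hence a candidate solution has the form $X=S_0\cup\{a_v:v\in P\}\cup\{b_e:e\in Q\}$ for some $P\subseteq V(G)$ and $Q\subseteq E(G)$. Writing $a$ for the number of $e\in Q$ with both endpoints in $P$ and $c$ for the number of $e\in Q$ with no endpoint in $P$, I would compute the boundary edge by edge: the edges from $S_0$ to the excluded $a_v$ contribute $\sum_{v\notin P}d_G(v)=2m-\mathrm{vol}(P)$, while the incidence edges contribute $\mathrm{vol}(P)+2\card{Q}-2I$, where $\mathrm{vol}(P)=\sum_{v\in P}d_G(v)$ and $I=\sum_{e\in Q}\card{e\cap P}$. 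The degree-dependent term $\mathrm{vol}(P)$ cancels exactly---this is precisely why each $a_v$ is attached to $d_G(v)$ vertices of $S_0$---leaving the remarkably simple identity $\ebound{X}=2m-2a+2c$, valid for an arbitrary input graph $G$ with no regularity assumption.

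Given this identity the equivalence is short. If $G$ has a $k$-clique $C$, then $X=S_0\cup\{a_v:v\in C\}\cup\{b_e:e\in E(C)\}$ has $a=\binom{k}{2}$ and $c=0$, hence $\ebound{X}=2m-k(k-1)=t$ and $\card{X}=k'$. Conversely, $\ebound{X}\le t$ gives $a-c\ge\binom{k}{2}$, so $a\ge\binom{k}{2}$; since these $a$ edges all lie inside $G[P]$ we get $\binom{\card{P}}{2}\ge a\ge\binom{k}{2}$ and thus $\card{P}\ge k$, while $\card{X}\le k'$ gives $\card{P}+\card{Q}\le k+\binom{k}{2}$ together with $\card{Q}\ge a\ge\binom{k}{2}$, forcing $\card{P}\le k$. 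Hence $\card{P}=k$ and $a=\binom{k}{2}$, which means $G[P]$ is a complete graph on $k$ vertices. I expect the main obstacle to be establishing the boundary identity exactly and, relatedly, ensuring that no degenerate solution escapes the analysis: the rigidity of $S_0$ (forcing $S_0\subseteq X$) and the degree-matched attachment of the $a_v$ (forcing the cancellation of $\mathrm{vol}(P)$) are the two points on which correctness hinges, and both must be checked carefully together with the trivial small cases.
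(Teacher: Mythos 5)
Your proof is correct and follows essentially the same strategy as the paper's: a large base clique containing the terminal that any feasible $X$ is forced to absorb, vertex- and edge-gadgets encoding the incidence structure of $G$, and budget arithmetic with $k'=(\text{base size})+k+\binom{k}{2}$ and a boundary target lowered by $2\binom{k}{2}$. The one genuine difference is that the paper reduces from \textsc{Clique} on $d$-regular graphs (invoking its known \classNP-completeness on that class) so that every vertex-gadget attaches to the same $d$ distinguished base vertices, whereas you handle arbitrary graphs by attaching $a_v$ to $d_G(v)$ vertices of $S_0$, which is exactly what makes $\mathrm{vol}(P)$ cancel in your identity $\ebound{X}=2m-2a+2c$; this spares you the regularity hypothesis, and your closed-form identity plays the same role as the paper's equivalent $+2/0/-2$ exchange argument for adding edge-vertices.
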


\begin{proof}
We give a reduction from the \textsc{Clique} problem. It is known that this problem is \classNP-complete for regular graphs~\cite{GareyJ79}. Let $(G,k)$ be an instance of \textsc{Clique}, with $G$ being a $d$-regular $n$-vertex graph. We create an instance $(G', k', t, s)$ of \textsc{Cutting $k$ Vertices by Edge-Cut with Terminal} as follows. The graph $G'$ is constructed by starting from a base clique of size $d n$. One vertex in this base clique is selected as the terminal $s$, and we additionally distinguish $d$ special vertices. For each $v \in V(G)$, we add a new vertex to $G'$, and add an edge between this vertex and all of the $d$ distinguished vertices of the base clique. For each edge $ e = uv$ in $G$, we also add a new vertex to $G'$, and add edges between this vertex and vertices corresponding to $u$ and $v$.
The construction is shown in Fig.~\ref{fig:NPc}.
We set $k' = dn + k + \binom{k}{2}$ and $t = d n - 2 \binom{k}{2}$.

\begin{figure}[ht]
\centering\scalebox{0.6}{\input{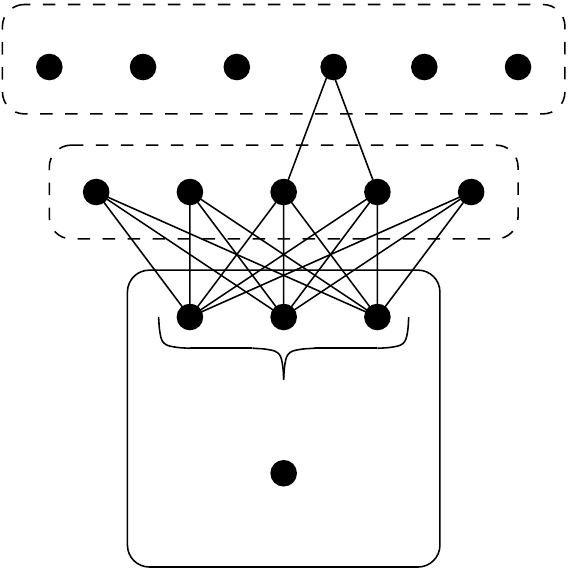_t}}
\caption{Construction of $G'$ in the proof of Theorem~\ref{thm:sec_terminal_np}.
\label{fig:NPc}}
\end{figure}

If $G$ has a $k$-clique $K$, then selecting as $X$ the base clique and all vertices of $G'$ corresponding to vertices and edges of $K$ gives a solution to $(G',k',t,s)$, as we have $\card{X} = dn + k + \binom{k}{2}$ and 
$\ebound{X} = (d n - d k) + \bigl(d k - 2 \binom{k}{2}\bigr) = d n - 2 \binom{k}{2}$. For the other direction, consider any solution $X$ to instance $(G',k',t,s)$. The set $X$ must contain the whole base clique, as otherwise there are at least $dn - 1$ edges inside the base clique that belong to $\partial(X)$. Let $V_0 \subseteq V$ and $E_0 \subseteq E$ be the subsets of $X$ corresponding to vertices and edges of $G$, respectively. If $E_0 = \emptyset$, then $\ebound{X} = d n$. Assume now that $V_0$ is fixed, and consider how adding vertices to $E_0$ changes  $\ebound{X}$. For each edge $e \in E(G)$, if neither of the endpoints of $e$ is in $V_0$, then adding $e$ to $E_0$ adds $2$ to $\ebound{X}$. If exactly one of the endpoints of $e$ is in $V_0$, then adding $e$ to $E_0$ does not change $\ebound{X}$. Finally, if both of the endpoints of $e$ are in $V_0$, then adding $e$ to $E_0$ reduces $\ebound{X}$ by $2$. Thus, in order to have $\ebound{X} \le d n  - 2 \binom{k}{2}$, we must have that $\card{E_0} \ge \binom{k}{2}$ and the endpoints of all edges in $E_0$ are in $V_0$. But due to the requirement that $\card{X} \le  dn + k + \binom{k}{2}$, this is only possible if $V_0$ induces a clique in $G$.
\end{proof}

\paragraph{Acknowledgements.} We thank the anonymous reviewers for pointing out the related work of Lokshtanov and Marx.

\bibliographystyle{amsplain}
\bibliography{refs}

\end{document}